\newcolumntype{L}[1]{>{\raggedright\let\newline\\\arraybackslash\hspace{0pt}}m{#1}}
\newcolumntype{C}[1]{>{\centering\let\newline\\\arraybackslash\hspace{0pt}}m{#1}}
\newcolumntype{R}[1]{>{\raggedleft\let\newline\\\arraybackslash\hspace{0pt}}m{#1}}
\let\MYcaption\@makecaption
\let\@makecaption\MYcaption
\let\oldgls\gls
\let\oldglspl\glspl
\newcommand\fussy@ifnextchar[3]{%
	\let\reserved@d=#1%
	\def\reserved@a{#2}%
	\def\reserved@b{#3}%
	\futurelet\@let@token\fussy@ifnch}
\def\fussy@ifnch{%
	\ifx\@let@token\reserved@d
		\let\reserved@c\reserved@a
	\else
		\let\reserved@c\reserved@b
	\fi
	\reserved@c}
\renewcommand{\gls}[1]{%
\oldgls{#1}\fussy@ifnextchar.{\@checkperiod}{\@}}
\renewcommand{\glspl}[1]{%
\oldglspl{#1}\fussy@ifnextchar.{\@checkperiod}{\@}}
\newcommand{\@checkperiod}[1]{%
	\ifnum\sfcode`\.=\spacefactor\else#1\fi
}
\newacronym{wrt}{w.r.t.}{with respect to}
\newacronym{RHS}{R.H.S.}{right-hand side}
\newacronym{LHS}{L.H.S.}{left-hand side}
\newacronym{iid}{i.i.d.}{independent and identically distributed}
\newacronym{SOTA}{SOTA}{state-of-the-art}
\let\saved@bibitem\@bibitem\makeatother
\let\@bibitem\saved@bibitem\makeatother
		\let\Cref\crtCref
		\let\cref\crtcref
\def\cleartheorem#1{%
    \expandafter\let\csname#1\endcsname\relax
    \expandafter\let\csname c@#1\endcsname\relax
}
\def\clearthms#1{ \@for\tname:=#1\do{\cleartheorem\tname} }
		\newtheorem{Theorem}{Theorem}
		\newtheorem{Corollary}{Corollary}
		\newtheorem{Proposition}{Proposition}
		\newtheorem{Theorem}{Theorem}
	\newtheorem{Definition}{Definition}
	\newtheorem{Example}{Example}
\theoremstyle{remark}
\theoremstyle{plain}
\newcommand{\qednew}{\nobreak \ifvmode \relax \else
		\ifdim\lastskip<1.5em \hskip-\lastskip
			\hskip1.5em plus0em minus0.5em \fi \nobreak
		\vrule height0.75em width0.5em depth0.25em\fi}
\NewDocumentCommand{\movedownsub}{e{^_}}{%
	\IfNoValueTF{#1}{%
		\IfNoValueF{#2}{^{}}% neither ^ nor _, do nothing; if no ^ but _, add ^{}
	}{%
		^{#1}% add superscript if present
	}%
	\IfNoValueF{#2}{_{#2}}% add subscript if present
}
\let\latexchi\chi
\RenewDocumentCommand{\chi}{}{\latexchi\movedownsub}
\newcommand{\calN}{\mathcal{N}}
\newcommand{\calP}{\mathcal{P}}
\newcommand{\calT}{\mathcal{T}}
\newcommand{\bY}{\mathbf{Y}}
\newcommand{\bbR}{\mathbb{R}}
\newcommand{\bbS}{\mathbb{S}}
\newcommand{\scN}{\mathscr{N}}
\DeclareSymbolFont{bsfletters}{OT1}{cmss}{bx}{n}
\DeclareSymbolFont{ssfletters}{OT1}{cmss}{m}{n}
\DeclareMathSymbol{\bsfGamma}{0}{bsfletters}{'000}
\DeclareMathSymbol{\ssfGamma}{0}{ssfletters}{'000}
\DeclareMathSymbol{\bsfDelta}{0}{bsfletters}{'001}
\DeclareMathSymbol{\ssfDelta}{0}{ssfletters}{'001}
\DeclareMathSymbol{\bsfTheta}{0}{bsfletters}{'002}
\DeclareMathSymbol{\ssfTheta}{0}{ssfletters}{'002}
\DeclareMathSymbol{\bsfLambda}{0}{bsfletters}{'003}
\DeclareMathSymbol{\ssfLambda}{0}{ssfletters}{'003}
\DeclareMathSymbol{\bsfXi}{0}{bsfletters}{'004}
\DeclareMathSymbol{\ssfXi}{0}{ssfletters}{'004}
\DeclareMathSymbol{\bsfPi}{0}{bsfletters}{'005}
\DeclareMathSymbol{\ssfPi}{0}{ssfletters}{'005}
\DeclareMathSymbol{\bsfSigma}{0}{bsfletters}{'006}
\DeclareMathSymbol{\ssfSigma}{0}{ssfletters}{'006}
\DeclareMathSymbol{\bsfUpsilon}{0}{bsfletters}{'007}
\DeclareMathSymbol{\ssfUpsilon}{0}{ssfletters}{'007}
\DeclareMathSymbol{\bsfPhi}{0}{bsfletters}{'010}
\DeclareMathSymbol{\ssfPhi}{0}{ssfletters}{'010}
\DeclareMathSymbol{\bsfPsi}{0}{bsfletters}{'011}
\DeclareMathSymbol{\ssfPsi}{0}{ssfletters}{'011}
\DeclareMathSymbol{\bsfOmega}{0}{bsfletters}{'012}
\DeclareMathSymbol{\ssfOmega}{0}{ssfletters}{'012}
\newcommand*\rel@kern[1]{\kern#1\dimexpr\macc@kerna}
\newcommand*\widebar[1]{%
  \begingroup
  \def\mathaccent##1##2{%
    \rel@kern{0.8}%
    \overline{\rel@kern{-0.8}\macc@nucleus\rel@kern{0.2}}%
    \rel@kern{-0.2}%
  }%
  \macc@depth\@ne
  \let\math@bgroup\@empty \let\math@egroup\macc@set@skewchar
  \mathsurround\z@ \frozen@everymath{\mathgroup\macc@group\relax}%
  \macc@set@skewchar\relax
  \let\mathaccentV\macc@nested@a
  \macc@nested@a\relax111{#1}%
  \endgroup
}
\DeclareMathOperator{\var}{var}
\DeclareMathOperator{\cov}{cov}
\DeclareMathOperator{\ima}{im}
\newcommand{\ifbcdot}[1]{\ifblank{#1}{\cdot}{#1}}
\DeclarePairedDelimiterX\abs[1]{\lvert}{\rvert}{\ifbcdot{#1}}
\DeclarePairedDelimiterX\parens[1]{(}{)}{\ifbcdot{#1}}
\DeclarePairedDelimiterX\brk[1]{[}{]}{\ifbcdot{#1}}
\DeclarePairedDelimiterX\braces[1]{\{}{\}}{\ifbcdot{#1}}
\DeclarePairedDelimiterX\angles[1]{\langle}{\rangle}{\ifblank{#1}{\cdot,\cdot}{#1}}
\DeclarePairedDelimiterX\ip[2]{\langle}{\rangle}{\ifbcdot{#1},\ifbcdot{#2}}
\DeclarePairedDelimiterX\norm[1]{\lVert}{\rVert}{\ifbcdot{#1}}
\DeclarePairedDelimiterX\ceil[1]{\lceil}{\rceil}{\ifbcdot{#1}}
\DeclarePairedDelimiterX\floor[1]{\lfloor}{\rfloor}{\ifbcdot{#1}}
\DeclareFontFamily{U}{matha}{\hyphenchar\font45}
\DeclareFontShape{U}{matha}{m}{n}{
      <5> <6> <7> <8> <9> <10> gen * matha
      <10.95> matha10 <12> <14.4> <17.28> <20.74> <24.88> matha12
      }{}
\DeclareSymbolFont{matha}{U}{matha}{m}{n}
\DeclareFontFamily{U}{mathx}{\hyphenchar\font45}
\DeclareFontShape{U}{mathx}{m}{n}{
      <5> <6> <7> <8> <9> <10>
      <10.95> <12> <14.4> <17.28> <20.74> <24.88>
      mathx10
      }{}
\DeclareSymbolFont{mathx}{U}{mathx}{m}{n}
\DeclareMathDelimiter{\vvvert}{0}{matha}{"7E}{mathx}{"17}
\DeclarePairedDelimiterX\vertiii[1]{\vvvert}{\vvvert}{\ifbcdot{#1}}
\DeclarePairedDelimiterXPP\trace[1]{\operatorname{Tr}}{(}{)}{}{\ifbcdot{#1}} % column vector
\DeclarePairedDelimiterXPP\col[1]{\operatorname{col}}{\{}{\}}{}{\ifbcdot{#1}} % column vector
\DeclarePairedDelimiterXPP\row[1]{\operatorname{row}}{\{}{\}}{}{\ifbcdot{#1}} % row vector
\DeclarePairedDelimiterXPP\erf[1]{\operatorname{erf}}{(}{)}{}{\ifbcdot{#1}}
\DeclarePairedDelimiterXPP\erfc[1]{\operatorname{erfc}}{(}{)}{}{\ifbcdot{#1}}
\DeclarePairedDelimiterXPP\KLD[2]{D}{(}{)}{}{\ifbcdot{#1}\, \delimsize\|\, \ifbcdot{#2}} % KL divergence
\DeclarePairedDelimiterXPP\op[2]{\operatorname{#1}}{(}{)}{}{#2} % general operator
\newcommand{\ud}{\,\mathrm{d}} % for integrals like \int f(x) \ud x
\DeclarePairedDelimiterXPP\indicate[1]{{\bf 1}}{\{}{\}}{}{\ifbcdot{#1}}
\NewDocumentCommand\ofrac{s m}{%
	\IfBooleanTF#1%
	{\dfrac{1}{#2}}%
	{\frac{1}{#2}}%
}
\NewDocumentCommand\ddfrac{s m m}{%
	\IfBooleanTF#1%
	{\dfrac{\mathrm{d} {#2}}{\mathrm{d} {#3}}}%
	{\frac{\mathrm{d} {#2}}{\mathrm{d} {#3}}}%
}
\NewDocumentCommand\ppfrac{s m m}{%
	\IfBooleanTF#1%
	{\dfrac{\partial {#2}}{\partial {#3}}}%
	{\frac{\partial {#2}}{\partial {#3}}}%
}
\newcommand{\setgiven}{:}
\providecommand\given{}
\DeclarePairedDelimiterX\Set[2]\{\}{%
	\if#1:%
		\renewcommand\given{\SetSymbol{:}}%
	\else%
		\renewcommand\given{\SetSymbol[\delimsize]{#1}}%
	\fi%
#2
}
\NewDocumentCommand\set{s O{\setgiven} m}{%
	\IfBooleanTF#1%
	{\Set*{#2}{#3}}%
	{\Set{#2}{#3}}%
}
\NewDocumentCommand{\evalat}{ s O{\big} m e{_^} }{%
\IfBooleanTF{#1}%
{\left. #3 \right|}{#3#2|}%
\IfValueT{#4}{_{#4}}%
\IfValueT{#5}{^{#5}}%
}
\providecommand\given{}
\DeclarePairedDelimiterXPP\cprob[1]{}(){}{
\renewcommand\given{\nonscript\,\delimsize\vert\allowbreak\nonscript\,\mathopen{}}%
\DeclarePairedDelimiterXPP\cexp[1]{}[]{}{
\renewcommand\given{\nonscript\,\delimsize\vert\allowbreak\nonscript\,\mathopen{}}%
#1%
}
\DeclareDocumentCommand \P { s e{_^} d() g } {%
	\mathbb{P}%
	\IfBooleanTF{#1}%
		{
			\IfValueT{#2}{_{#2}}%
			\IfValueT{#3}{^{#3}}%
			\IfValueTF{#5}{\cprob{#4 \given #5}}{\IfValueT{#4}{\cprob{#4}}}%
		}%
		{
			\IfValueT{#2}{_{#2}}%
			\IfValueT{#3}{^{#3}}%
			\IfValueTF{#5}{\cprob*{#4 \given #5}}{\IfValueT{#4}{\cprob*{#4}}}%
		}%
}
\DeclareDocumentCommand \E { s e{_^} o g } {%
	\mathbb{E}%
	\IfBooleanTF{#1}%
		{
			\IfValueT{#2}{_{#2}}%
			\IfValueT{#3}{^{#3}}%
			\IfValueTF{#5}{\cexp{#4 \given #5}}{\IfValueT{#4}{\cexp{#4}}}%
		}%
		{
			\IfValueT{#2}{_{#2}}%
			\IfValueT{#3}{^{#3}}%	
			\IfValueTF{#5}{\cexp*{#4 \given #5}}{\IfValueT{#4}{\cexp*{#4}}}%		
			%\IfValueT{#4}{\cexp*{#4}}%
		}%
}
\DeclareDocumentCommand \Var { s e{_^} d() g } {%
	\var%
	\IfBooleanTF{#1}%
		{
			\IfValueT{#2}{_{#2}}%
			\IfValueT{#3}{^{#3}}%
			\IfValueTF{#5}{\cprob{#4 \given #5}}{\IfValueT{#4}{\cprob{#4}}}%
		}%
		{
			\IfValueT{#2}{_{#2}}%
			\IfValueT{#3}{^{#3}}%	
			\IfValueTF{#5}{\cprob*{#4 \given #5}}{\IfValueT{#4}{\cprob*{#4}}}%		
			%\IfValueT{#4}{\cprob*{#4}}%
		}%
}
\DeclareDocumentCommand \Cov { s e{_^} d() g } {%
	\cov%
	\IfBooleanTF{#1}%
		{
			\IfValueT{#2}{_{#2}}%
			\IfValueT{#3}{^{#3}}%
			\IfValueTF{#5}{\cprob{#4 \given #5}}{\IfValueT{#4}{\cprob{#4}}}%
		}%
		{
			\IfValueT{#2}{_{#2}}%
			\IfValueT{#3}{^{#3}}%	
			\IfValueTF{#5}{\cprob*{#4 \given #5}}{\IfValueT{#4}{\cprob*{#4}}}%		
			%\IfValueT{#4}{\cprob*{#4}}%
		}%
}
\NewDocumentCommand \dist {m o o} {%
\mathrm{#1}\left(%
	\IfValueT{#3}{%
		\tl_if_blank:nTF{ #3 }{\cdot\, \middle|\, }{#3\, \middle|\, }%
	}
	\IfValueT{#2}{#2}%
\right)%
}
\NewDocumentCommand {\cbrace} {t+ D[]{black} D(){\widthof{#5}} m m } {%
	\begingroup%
		\color{#2}
		\IfBooleanTF{#1}{%
			\overbrace{#4}^%
		}{
			\underbrace{#4}_%
		}%
		{\parbox[c]{#3}{\centering\footnotesize{#5}}}%
	\endgroup% 
}
\let\oldforall\forall
\renewcommand{\forall}{\oldforall \, }
\let\oldexist\exists
\renewcommand{\exists}{\oldexist \, }
\newcommand{\rankcolor}[2]{%
	\expandafter\renewcommand\csname #1\endcsname[1]{%
		\ifblank{##1}{%
			{\color{#2} \textbf{#2}}%
		}{%
			\ifmmode
				\textcolor{#2}{\bm{##1}}%
			\else%
				{\color{#2} \textbf{##1}}%
			\fi	
		}%
	}
}
\DeclareDocumentCommand{\includeCroppedPdf}{ o O{./Figures/} m }{
	\IfFileExists{#2#3-crop.pdf}{}{%
		\immediate\write18{pdfcrop #2#3.pdf #2#3-crop.pdf}}%
	\includegraphics[#1]{#2#3-crop.pdf}
}
\newcommand*{\addFileDependency}[1]{% argument=file name and extension
  \typeout{(#1)}
  \@addtofilelist{#1}
  \IfFileExists{#1}{}{\typeout{No file #1.}}
}
\definecolor{gray90}{gray}{0.9}
\def\colorlist{red,blue,brown,cyan,darkgray,gray,lightgray,green,lime,magenta,olive,orange,pink,purple,teal,violet,white,yellow}
\def\startcomment{[}
	\newcommand{\createcolor}[1]{%
			\expandafter\newcommand\csname #1\endcsname[1]{{\color{#1} ##1}}%
	}
	\newcommand{\msout}[1]{\text{\color{green} \st{\ensuremath{#1}}}}
	\newcommand{\del}[1]{{\color{green}\ifmmode \msout{#1}\else\st{#1}\fi}}
	\newcommand{\createcolor}[1]{%
			\expandafter\newcommand\csname #1\endcsname[1]{%
				\noexpandarg%
				\StrChar{##1}{1}[\firstletter]%
				\if\firstletter\startcomment%
					\relax
				\else%
					##1
				\fi
			}%
	}
	\newcommand{\msout}[1]{}
	\newcommand{\del}[1]{}
\def\@tempa#1,{%
    \ifx\relax#1\relax\else
        \createcolor{#1}%
        \expandafter\@tempa
    \fi
}
\newcommand{\hhide}[1]{}
	\def\@testdef #1#2#3{%
		\def\reserved@a{#3}\expandafter \ifx \csname #1@#2\endcsname
			\reserved@a  \else
			\typeout{^^Jlabel #2 changed:^^J%
				\meaning\reserved@a^^J%
				\expandafter\meaning\csname #1@#2\endcsname^^J}%
			\@tempswatrue \fi}
\def\eqref#1{equation~\ref{#1}}
\def\1{\bm{1}}
\def\vx{{\bm{x}}}
\def\vy{{\bm{y}}}
\DeclareMathAlphabet{\mathsfit}{\encodingdefault}{\sfdefault}{m}{sl}
\SetMathAlphabet{\mathsfit}{bold}{\encodingdefault}{\sfdefault}{bx}{n}
\def\sM{{\mathbb{M}}}
\def\sS{{\mathbb{S}}}
\newcommand{\R}{\mathbb{R}}
\begin{document}

\title{Edge centrality and the total variation of graph distributional signals }

\author{Feng~Ji 
        % <-this % stops a space
\thanks{F. Ji is with the School of Electrical and Electronic Engineering, Nanyang Technological University, 639798, Singapore (Email: jifeng@ntu.edu.sg).}% <-this % stops a space
}

% The paper headers
%\markboth{Journal of \LaTeX\ Class Files,~Vol.~14, No.~8, August~2021}%
%{Shell \MakeLowercase{\textit{et al.}}: A Sample Article Using IEEEtran.cls for IEEE Journals}

%\IEEEpubid{0000--0000/00\$00.00~\copyright~2021 IEEE}
% Remember, if you use this you must call \IEEEpubidadjcol in the second
% column for its text to clear the IEEEpubid mark.

\maketitle

\begin{abstract}
This short note is a supplement to \cite{Ji23}, in which the total variation of graph distributional signals is introduced and studied. We introduce a different formulation of total variation and relate it to the notion of edge centrality. The relation provides a different perspective of total variation and may facilitate its computation.     
\end{abstract}

\begin{IEEEkeywords}
Graph signal processing, graph distributional signals, total variation, edge centrality
\end{IEEEkeywords}

\subsection{Graph distributional signals and notions total variation} The concept of graph distributional signals is introduced and studied in \cite{Ji23, Jif23, Ji2023}. In \cite{Jif23}, signal processing aspects such as Fourier transform and filtering are studied in detail. On the other hand, the notion of total variation of graph distributional signals is defined in \cite{Ji23}, which we recall as follows.

Let $G=(V,E)$ be a finite, connected and unweighted simple graph. Let $\vx = (\vx(i))_{1\leq i\leq n}$ be a graph signal, where $\vx(i)$ belongs to a signal space such as $\mathbb{R}$, $\bbS$ (a finite set with discrete metric). Its (squared) \emph{total variation} (\cite{Shu13}) is defined by 
\begin{align} \label{eq:tvv}
\mathcal{T}_G(\vx) = \sum_{(v_i,v_j) \in E}(\vx(i)-\vx(j))^2. 
\end{align}

To generalize, for a metric space $\sM$ (e.g., $\mathbb{R}^n$, $\bbS^n$), let $\mathcal{P}(\sM)$ be the space of probability measures on $\sM$ (\gls{wrt} the Borel $\sigma$-algebra) having finite second moments. 

\begin{Definition} \label{defn:tms} 
Given a graph $G$ with $n$ nodes and graph signal space $\sM$ such as $\bbS^n$ or $\bbR^n$, an element $\mu \in \mathcal{P}(\sM)$ is called a \emph{graph distributional signal}. The marginals of a graph distributional signal $\mu$ are the marginal distributions $\scN = \set{\mu_i \given 1\leq i\leq n}$ \gls{wrt} the $n$ coordinates of $\sM$. 
\end{Definition}

Given a graph distributional signal $\mu \in \mathcal{P}(\sM)$ for $\sM = \sS^n$ or $\R^n$, its total variation can be modified directly from (\ref{eq:tvv}) as follows:
\begin{align} \label{eq:tmu}
    \mathcal{T}_G(\mu) = \E_{\vx \sim \mu}\mathcal{T}_G(\vx) = \int \sum_{(v_i,v_j)\in E} d(\vx(i),\vx(j))^2 \ud\mu(\vx).
\end{align}
where $d$ is the metric on the signal space. Modeled on the Wasserstein metric $W(\cdot,\cdot)$ \cite{Vil09}, the total variation of a set of graph distributional marginals \cite{Ji23} is as follows.

\begin{Definition}
Given the marginals $\scN = \set{\mu_i \given 1\leq i\leq n}$ of a graph distributional signal in $\calP(\sM)$, the total variation of $\scN$ is
\begin{align}
    \mathcal{T}_G(\scN) = \inf_{\mu \in \Gamma(\scN)} \mathcal{T}_G(\mu),
\end{align}
where $\Gamma(\scN)$ is the collection of all graph distributional signals in $\mathcal{P}(\sM)$ whose marginals agree with $\scN$. 
\end{Definition}

For marginal distributions $\scN = (\nu_v)_{v\in V}$, we have introduced the total variation $\calT_G(\scN)$. However, computing the total variation often requires solving an optimization problem that is usually hard or even impossible to solve. It does not have a closed-form formula. In GSP, a useful strategy to understand a general graph $G$ is to study the spanning trees of $G$ \cite{Sha11, Ji16, Jif19}, when the estimation of a quantity defined on $G$ is intractable. This inspires us to aggregate the total variation of $\scN$ on subtrees of $G$. Naturally, we may consider expectation as a way of aggregation, which leads to the following framework.

\begin{Definition}
Let $\mathbb{T}_G$ be the finite set of all subtrees of $G$. We use $\mathfrak{M}_G$ to denote the space of probability distributions on $\mathbb{T}_G$. For $\eta \in \mathfrak{M}_G$, define the total variation of $\scN$ w.r.t.\ $\eta$ as 
\begin{align} \label{eq:cms}
    \calT_{\eta}(\scN) = \mathbb{E}_{T \sim \eta}\calT_{T}(\scN).
\end{align}
Here, if $T$ does not contain all the nodes of $G$, then $\calT_{T}(\scN)$ considers $\scN$ as the subset restricted to nodes of $T$.
\end{Definition}

Unlike \cite{Ji23}, $\eta$ is fixed and independent of $\scN$. Therefore, $\calT_{\eta}$ is a proxy of $\calT_G$ that is different from that in \cite{Ji23}. We relate it to the notion of ``centrality'' in the next section, which may facilitate its computation. 

\subsection{Connection to edge centralities} \emph{Centrality} is a well-studied topic in network analysis. An \emph{edge centrality} is a function: $C: E \to \mathbb{R}_+$. As $E$ is a finite set, $C$ can also be arranged as a vector $C = \big(C(e)\big)_{e\in E}$. This means we can take the inner product. Let $\mathfrak{C}_G$ be the set of edge centralities on $G$. We have the following result.

\begin{Theorem} \label{thm:tia}
    There is a map $\phi_G: \mathfrak{M}_G \to \mathfrak{C}_G$ such that the following identity holds for any $\eta\in \mathfrak{M}_G$ and marginal $\scN = (\mu_i)_{v_i\in V}$: 
    \begin{align} \label{eq:cam}
        \calT_{\eta}(\scN) = \langle \phi_G(\eta), W_{\scN} \rangle, 
    \end{align}
    where $W_{\scN} = \big(W_{\scN}(e)\big)_{e\in E}$ is the (edge) vector $W_{\scN}(e) = W(\mu_i,\mu_j)^2$ with $e=(v_i,v_j)$, where $W(\cdot, \cdot)$ is the Wasserstein metric. Moreover, $\phi_G$ is unique if $\mathbb{R}^n \subset \mathbb{M}$. 
\end{Theorem}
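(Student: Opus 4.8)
The plan is to reduce the identity to a closed-form evaluation of $\calT_T(\scN)$ in the case where the underlying graph is a tree. I would first prove that for every subtree $T\in\mathbb{T}_G$ and every marginal $\scN=(\mu_i)$ one has $\calT_T(\scN)=\sum_{e=(v_i,v_j)\in E(T)}W(\mu_i,\mu_j)^2=\langle\vone_{E(T)},W_{\scN}\rangle$, where $\vone_{E(T)}\in\{0,1\}^E$ is the indicator (edge) vector of $E(T)$. The inequality ``$\geq$'' holds on any graph: for any $\mu\in\Gamma(\scN)$ and any edge $(v_i,v_j)$, the joint law of $(\vx(i),\vx(j))$ under $\mu$ is a coupling of $\mu_i$ and $\mu_j$, so $\mathbb{E}_{\vx\sim\mu}\,d(\vx(i),\vx(j))^2\geq W(\mu_i,\mu_j)^2$; summing over $E(T)$ and taking the infimum over $\mu$ gives the bound. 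For ``$\leq$'' I would use a gluing construction exploiting the absence of cycles: root $T$ at an arbitrary vertex; for each edge joining a parent $p$ to a child $c$, fix a $W$-optimal coupling $\gamma_{pc}$ of $(\mu_p,\mu_c)$ and disintegrate it as $\gamma_{pc}(\ud x_c\mid x_p)\,\mu_p(\ud x_p)$; then define $\mu$ on the product space indexed by $V(T)$ by drawing the root from its marginal and, descending the tree, drawing each child from its parent via the kernel $\gamma_{pc}(\cdot\mid x_p)$. An induction on depth shows that the marginal at every node is the prescribed one and that the pairwise marginal on every tree edge is exactly $\gamma_{pc}$, whence $\mu\in\Gamma(\scN)$ and $\calT_T(\mu)=\sum_{e\in E(T)}W(\mu_i,\mu_j)^2$, which matches the lower bound.

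Granting this, $\phi_G$ and the identity (\ref{eq:cam}) follow by linearity of expectation: $\calT_\eta(\scN)=\mathbb{E}_{T\sim\eta}\langle\vone_{E(T)},W_{\scN}\rangle=\langle\mathbb{E}_{T\sim\eta}\vone_{E(T)},W_{\scN}\rangle$. I would therefore set $\phi_G(\eta):=\mathbb{E}_{T\sim\eta}\vone_{E(T)}$, i.e.\ $\phi_G(\eta)(e)=\mathbb{P}_{T\sim\eta}[e\in E(T)]=\sum_{T\in\mathbb{T}_G,\,e\in E(T)}\eta(T)$, the probability that a random subtree uses $e$. This is a non-negative function on $E$, hence lies in $\mathfrak{C}_G$, and (\ref{eq:cam}) holds by construction. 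As a sanity check, taking $\eta$ uniform over the spanning trees of $G$ yields the classical spanning-tree edge-usage probability, which in the unweighted case equals the effective resistance of $e$.

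For the uniqueness claim, assume $\bbR^n\subset\sM$ and suppose $C,C'\in\mathfrak{C}_G$ both satisfy (\ref{eq:cam}) for all $\scN$. I would test against Dirac signals: for $\va\in\bbR^n$ the point mass $\delta_{\va}$ is a graph distributional signal whose marginals are the point masses $\delta_{\va(i)}$, so its edge vector is $W_{\scN}(e)=(\va(i)-\va(j))^2$. Subtracting the two instances of (\ref{eq:cam}) for this particular $\scN$ gives $\sum_{e=(v_i,v_j)\in E}(C-C')(e)(\va(i)-\va(j))^2=0$ for all $\va\in\bbR^n$. The left-hand side is the quadratic form $\va^{\top}L\va$ of the symmetric (weighted-Laplacian-type) matrix $L$ with off-diagonal entries $L_{ij}=-(C-C')(e)$ for $e=(v_i,v_j)\in E$ and $0$ otherwise; a symmetric matrix whose quadratic form vanishes identically is zero (by polarization), so $C=C'$, and $\phi_G$ is the unique such map.

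The step I expect to be the main obstacle is the gluing construction in the tree case: one must justify a measurable disintegration of each $W$-optimal coupling, confirm that the sequential sampling scheme defines a bona fide probability measure on the product signal space whose pairwise marginals on all tree edges are \emph{simultaneously} the chosen optimal couplings, and verify that it has finite second moments so that it is an admissible element of the relevant space of distributional signals. For the signal spaces in play ($\bbR^n$ and $\bbS^n$, both Polish), these are routine, but they are the only points that need more than a one-line check; the rest is short computation.
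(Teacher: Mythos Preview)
Your argument is correct and, for the existence half, essentially identical to the paper's: both first establish the closed form $\calT_T(\scN)=\sum_{e\in E(T)}W(\mu_i,\mu_j)^2$ on trees (lower bound via edge-marginal couplings, upper bound via a root-and-glue construction of a joint law, which the paper phrases as a Bayesian network), and then obtain $\phi_G(\eta)(e)=\P_{T\sim\eta}(e\in T)$ by linearity of expectation.

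Where you genuinely diverge is in the uniqueness proof. The paper argues that one can find $m=|E|$ ordinary signals $\vx_1,\dots,\vx_m$ whose squared edge-difference vectors $\vy_i=\big((x_{i,k}-x_{i,l})^2\big)_{(v_k,v_l)\in E}$ are linearly independent, by observing that $\det(\vy_i)_{i,j}$ is a nonzero polynomial in the entries of the $\vx_i$ and invoking a Nullstellensatz-type nonvanishing. You instead test only against Dirac signals $\delta_{\va}$, recognize $\sum_{e}(C-C')(e)(\va(i)-\va(j))^2$ as the Laplacian quadratic form $\va\T L\va$ for the symmetric matrix $L$ with edge weights $C-C'$, and conclude $L=0$ by polarization. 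Your route is shorter and more self-contained (it avoids asserting that the determinant polynomial is nonzero, which the paper states without justification), while the paper's route makes explicit that the span of the test vectors $W_{\scN}$ over Dirac marginals is all of $\bbR^{|E|}$. Both are valid; yours is the cleaner of the two.
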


\begin{proof}
We first claim that for any tree $T=(V_T,E_T)$, we have $\calT_T(\scN) = \sum_{(v_i,v_j)\in E_T} W(\mu_i,\mu_j)^2$.

For any $\epsilon>0$, there is a $\mu_0 \in \Gamma(\scN)$ such that  $\calT_T(\calN) \geq \calT_T(\mu_0) - \epsilon$. For each edge $(v_i,v_j) \in E$, let $\mu_{0,i,j}$ be the marginal distribution of the pair $(v_i,v_j)$. The marginals of $\mu_{0,i,j}$ are $\mu_i$ and $\mu_j$ at $v_i$ and $v_j$ respectively. We have
\begin{align*}
    &\mathcal{T}_T(\scN) -\epsilon  \geq \mathcal{T}_T(\mu_0) \\
    & = \int \sum_{(v_i,v_j)\in E_T} d(\vx(i),\vx(j))^2 \ud\mu_0(\vx) \\
    & = \sum_{(v_i,v_j) \in E_T} \int d(\vx(i),\vx(j))^2 \ud\mu_0(\vx) \\
    &= \sum_{(v_i,v_j) \in E_T} \int d(\vy(i),\vy(j))^2 \ud\mu_{0,i,j}(\vy) \\
    & \geq \sum_{(v_i,v_j)\in E_T} W(\mu_i,\mu_j)^2.
\end{align*}
Letting $\epsilon \to 0$, we see $\mathcal{T}_T(\scN) \geq  \sum_{(v_i,v_j)\in E_T} W(\mu_i,\mu_j)^2$.

In the opposite direction, fix a node $v_0 \in T$ as the source and orient all the edges away from $v_0$. For each edge $e=(v_i,v_j)$, let $\mu_e \in \Gamma(\{\mu_i,\mu_j\})$ that realizes $W(\mu_i,\mu_j)^2$. Using the orientation $T$ and Bayesian network \cite{Bis06} Section 8.1, there is a joint distribution $\mu_0$ on all the nodes whose marginal of each $e=(v_i,v_j)$ is $\mu_e$. Therefore, similar to the estimations above, we have
\begin{align*}
    & \sum_{(v_i,v_j)\in E_T} W(\mu_i,\mu_j)^2 \\
    & = \sum_{e=(v_i,v_j) \in E_T} \int d(\vy(i),\vy(j))^2 \ud\mu_e(\vy) \\
    & = \calT_T(\mu_0) \geq \calT_T(\scN).
\end{align*}
The claim is proved. 

For a general $G$ and $\eta = \big(p(T)\big)_{T\in \mathbb{T}_G} \in \mathfrak{M}_G$, we may compute that 
\begin{align*}
    & \calT_{\eta}(\scN) = \sum_{T \in \mathbb{T}_G}p(T)\calT_{T}(\scN) \\
    & = \sum_{T \in \mathbb{T}_G}p(T)\sum_{e=(v_i,v_j)\in T}W_{\scN}(e) \\
    & = \sum_{e=(v_i,v_j)\in E} W_{\scN}(e) \sum_{e\in T} p(T) \\
    & = \sum_{e=(v_i,v_j)\in E} W_{\scN}(e) \cdot \big(\mathbb{E}_{T\sim \eta} \chi_e \big),
\end{align*}
where $\chi_e: \mathbb{T}_G \to \{0,1\}$ is the indicator function, i.e, $\chi_e(T)=1$ if $e\in T$ and $0$ otherwise. Hence, to satisfy (\ref{eq:cam}), it suffices to let the centrality $C_{\eta} = \phi_G(\eta)$ to be defined by 
\begin{align} \label{eq:vbx}
C_{\eta}(e) = \mathbb{E}_{T\sim \eta} \chi_e.
\end{align}

Assuming $\mathbb{R}^n \subset \mathbb{M}$, to show $\phi_G$ is unique, it suffices to construct $m=|E|$ ordinary graph signals $\vx_1,\ldots, \vx_m$ such that the following holds. Write $\vx_i = (x_{i,j})_{1\leq j\leq n}$ as a vector and we form 
\begin{align*} 
\vy_i = \big( (x_{i,k}-x_{i,l})^2 \big)_{(v_k,v_l) \in E}.
\end{align*}
It suffices to require that the $m$-dimensional vectors $(\vy_i)_{1\leq i\leq m}$ are linearly independent, i.e., the $m\times m$ matrix $\bY$ with rows $(\vy_i)_{1\leq i\leq m}$ has non-zero determinant. However, if we view $X = (x_{i,j})_{\leq i\leq m, 1\leq j\leq n}$ as independent variables, then $\det(\bY)$ is a nonzero polynomial in $X$. Therefore, by Hilbert's Nullstellensatz, we can always find numbers to substitute the variables in $X$ such that the resulting $\det(\bY)$ is nonzero. The claim is proved.    
\end{proof}

The result essentially claims that there is a correspondence between edge centralities and subtree distributions. Therefore, to study such a distribution and the associated total variation, one may consider the corresponding edge centrality, and vice versa. 

\subsection{Examples} There are many choices of $\eta$ that give rise to familiar centralities $\phi_G(\eta)$ (see below), which can always be computed efficiently. This means that, by the theorem, we can compute the corresponding total variation efficiently. 

Suppose two edge centralities $C_1, C_2$ differ by scaling, i.e., there is $r>0$ such that $C_1 = rC_2$. We usually treat them as the same in providing information such as edge importance. We may thus define an equivalence relation ``$\sim$'' on $\mathfrak{C}_G$ as $C_1 \sim C_2$ if they differ by scaling. From the expression (\ref{eq:vbx}), we see that $\phi_G$ is linear. Since $\mathfrak{M}_G$ is compact, so is $\ima \phi_G$. However, the equivalence class of $\ima \phi$ can contain many more edge centralities. For any graph $|\mathbb{T}_G|$ is usually much larger than $|E|$. Hence, $\phi_G$ is rarely injective. Hence, given an edge centrality $C$ in $\ima \phi$, it usually requires additional prior information to recover or estimate the distribution that gives rise to $C$, such as the support and smoothness.  

We next give some examples. In all the examples, we write $C_{\eta}$ for $\phi_G(\eta)$. 

\begin{Example}
    \begin{enumerate}[(a)]
    \item Constant centrality: $\eta \in \mathfrak{M}_G$ is supported on all the edges with equal probability for each edge. In this case, $C_{\eta}(e) = 1/|E|$ and the resulting total variation is equivalent to the ordinary total variation (for ordinary graph signals) in GSP.
        \item Betweenness centrality: $\eta \in \mathfrak{M}_G$ is supported on the set of geodesic paths $G$ constructed in the following two-step sampling procedure. We uniformly sample a pair of distinct nodes $v_i,v_j$. A geodesic $P_{v_i,v_j}$ between $v_i,v_j$ is uniformly sampled among all the geodesic between $v_i,v_j$. Hence, 
        \begin{align*}
        C_{\eta}(e) = \frac{1}{c}\sum_{v_i\neq v_j \in V}\frac{\# P_{v_i,v_j,e}}{\# P_{v_i,v_j}},
        \end{align*}
        where $P_{v_i,v_j,e}$ is a geodesic path between $v_i$ and $v_j$ containing $e$. The constant $c$ is the number of distinct pairs of nodes. It is apparent that $C_{\eta}$ is equivalent to the betweenness centrality \cite{Bra01}. 
        \item Spanning tree centrality: $\eta \in \mathfrak{M}_G$ is the uniform distribution supported on the set of \emph{spanning trees} of $G$. Then for each $e$, $C_{\eta}(e)$ is the percentage of spanning trees containing $e$. This is nothing but the spanning tree centrality \cite{Mav15}.  
    \end{enumerate}
\end{Example}

By Theorem~\ref{thm:tia}, we have closed form formula for $\calT_{\eta}(\scN)$ if we have that for $W(\mu_i,\mu_j)$. We conclude this note by showing examples for which the above holds. 

\begin{Example}
    \begin{enumerate}[(a)]
        \item Suppose $\mathbb{M} = \mathbb{R}^n$ and $\mu_i$ is the normal distribution on $\mathbb{R}$ with mean $m_i$ and standard deviation $s_i$. Then $W(\mu_i,\mu_j)^2 = (m_i-m_j)^2 + (s_i-s_j)^2$. 
        \item Suppose $\mathbb{M} = \mathbb{R}^n$ and $\mu_i$ is the empirical distribution supported on the (increasingly) ordered set $\{x_{i,1},\ldots,x_{i,N}\}$, where $N$ is independent of $i$. Then 
        \begin{align*}
        W(\mu_i,\mu_j)^2 = \frac{1}{N}\sum_{1\leq k\leq N}(x_{i,k}-x_{j,k})^2.
        \end{align*}
        \item Suppose $\mathbb{M} = \mathbb{S}^n$, where $\mathbb{S} = \{s_1,\ldots, s_N\}$ is a finite discrete metric space. For $1\leq k\leq N$, let $p_{i,k}$ be the probability weight of $s_k$ for $\mu_i$. Then 
        \begin{align*}
            W(\mu_i,\mu_j)^2 = \frac{1}{2} \sum_{1\leq k\leq N} |p_{i,k}-p_{j,k}|. 
        \end{align*}
    \end{enumerate}
\end{Example}

\bibliographystyle{IEEEtran}
\bibliography{IEEEabrv,StringDefinitions,tsipn_bib}

\end{document}